\theoremstyle{definition}
\newtheorem{definition}{Definition}
\newtheorem{example}{Example}
\newtheorem{remark}{Remark}
\theoremstyle{remark}
\theoremstyle{plain}
\newtheorem{theorem}{Theorem}
\newtheorem{prop}{Proposition}
\newtheorem{lemma}{Lemma}
\newtheorem{cor}{Corollary}
\newtheorem{assumption}{Assumption}
\newcommand{\PP}{\mathbbm{P}}
\newcommand{\DD}{\mathcal{D}}
\newcommand{\XX}{\mathcal{X}}
\newcommand{\YY}{\mathcal{Y}}
\newcommand{\HH}{\mathcal{H}}
\renewcommand{\AA}{\mathcal{A}}
\newcommand{\BB}{\mathcal{B}}
\newcommand{\GG}{\mathcal{G}}
\newcommand{\VCdim}{\operatorname{VCdim}}
\newcommand{\rank}{\operatorname{rank}}
\newcommand{\SSD}{\operatorname{SSD}}
\newcommand{\DIS}{\operatorname{DISJ}}
\newcommand{\LCSk}{\operatorname{LCS-{\mathit k}-decision}}
\newcommand{\SM}{\operatorname{SM}}
\newcommand{\ind}{\mathbbm{1}}
\title{The Learning and Communication Complexity of Subsequence Containment}
\author{Mason DiCicco\footnote{Computer Science Department, WPI. [mtdicicco@wpi.edu, dreichman@wpi.edu]} \qquad Daniel Reichman$^*$}
\date{\today}
\begin{document}

\maketitle

\begin{abstract}
    We consider the learning and communication complexity of subsequence containment. In the learning problem, we seek to learn a classifier that positively labels a binary string $x$ if it contains a fixed binary string $y$ as a subsequence. In the communication problem, $x$ and $y$ are partitioned between two players, Alice and Bob, who wish to determine if $x$ contains $y$ as a subsequence using a minimal amount of communication. We devise asymptotically tight bounds for the sample complexity (VC dimension) of the learning problem and the communication complexity of the communication problem. Our results illustrate that the sample complexity of our learning problem can be considerably larger when the subsequence occurs in non-contiguous locations. 
\end{abstract}

\section{Introduction}
Given a string $x$ of length $n$ and a string $y$ of length $k \leq n$, we say $y$ is a \emph{subsequence} of $x$ if all of the characters of $y$ appear consecutively (but not necessarily contiguously) within $x$. The subsequence detection problem is to determine, given $x$ and $y$, whether $y$ is a subsequence of $x$. We study the \emph{communication complexity} of this problem: the minimal communication required to compute whether $y$ is a subsequence of $x$ when the characters of $x$ and $y$ are partitioned between two parties, Alice and Bob. We primarily focus on \emph{binary} sequences, but some of our results extend to arbitrary alphabets. 

Subsequence detection has been described as ``one of the most interesting and least studied problems in pattern matching" by \cite{janson2021hidden}. From the learning perspective, non-contiguity seems to arise in certain applications; important features input data could be rather fragmented. In particular, time series (e.g., \cite{kamiyama2017real}), linguistic (e.g., \cite{simard2005translating}) and genetic (e.g., \cite{tahmasebi2020capacity}) features are often non-contiguous. For example, subsequence \emph{anomaly detection} for time series data (as defined by \cite{keogh2007finding}) is a widely studied problem in computer science with a variety of applications. It has been used to detect irregular heartbeats by \cite{hadjem2016st}, machine degradation in manufacturing by \cite{mirylenka2013envelope}, hardware and software faults in data-centers by \cite{pelkonen2015gorilla}, noise within sensors by \cite{bahaadini2018machine}, and spoofed biometric data by \cite{fatemifar2019spoofing}.  Detecting subsequences is also useful in computational biology and has led to deep theoretical questions such as the study of the expected size of the longest common subsequence between two uniformly random strings~\cite{chvatal1975longest}. 

In many applications, the strings that are considered with respect to subsequence detection have lengths in the millions. This can lead to a significant slowdown when attempting to find subsequences in a long data stream. End users with limited computational capacity may share their stream with a party with more computational resources, motivating the question of \emph{communication complexity}. That is, how many bits of communication need to be exchanged in order to allow the party with more computational resources to solve the problem. Additionally, the existence or nonexistence of a subsequence can be useful in \emph{classification}, and it is of practical and theoretical interest to understand how the sample complexity of subsequence-dependent classifiers depends on $n$ and $k$, the lengths of the string and subsequence respectively.

We provide nearly tight bounds for the communication complexity of subsequence detection under a variety of settings (randomized vs. deterministic communication, different partitions of $x,y$ between the two parties, and whether or not the length of $y$ is fixed). We show that, up to log factors, the communication complexity of this problem scales like $O(k)$. This is somewhat surprising as our bounds hold for arbitrary partitions of $x$ and $y$ (not just the natural partition where Alice holds $x$ and Bob holds $y$). We complement these upper bounds by providing a nearly matching lower bound of $\Omega(k)$. We note that the communication complexity of detecting a substring in \emph{contiguous} locations under arbitrary bi-partitions is $\Omega(n)$ for $k=2$, as shown by \cite{golovnev2019string}.


Next, we consider the VC dimension of a family of \emph{subsequence} classifiers defined by containing a fixed subsequence. That is, every classifier is parameterized by $y \in \{0,1\}^k$ such that $x \in \{0,1\}^n$ is classified as $1$ if and only if $x$ contains $y$ as a subsequence (for a precise definition, see Definition~\ref{def:subclassifier}). We prove that the VC dimension of this family of (length $k$) subsequence classifiers is $\Theta(k)$. \cite{sutskever2014sequence} show that in some cases it is beneficial not to assume any upper bound on the length $n$ of the string being classified. Our bounds on the VC dimension easily extend to this case as well. In either case, we are not aware of previous bounds on the VC dimension of this family. Our methods can also be used to bound the VC dimension of classification based on \emph{super}sequences, where a sequence of length $k$ evaluates to $1$ if and only if it occurs in a fixed sequence of length $n \geq k$ as a subsequence. This classification problem resembles trace reconstruction as in \cite{batu2004reconstructing} and may be of independent interest.  

Our methods are straightforward. Lower bounds are proved using reductions from set-disjointness, and upper bounds are proved using simple protocols. Our reduction between subsequence containment and the disjointness problem is useful also in lower bounding the VC-dimension of the subsequence classifiers we study. This expands on the work of \cite{kremer1999randomized} in that it serves as another illustration of the connection between lower bounds from communication complexity and learning theory.

While elementary, our proofs and reductions differ from those  appearing in previous studies of the communication complexity of string related problems (e.g., \cite{sun2007communication, liben2006finding, golovnev2019string, bar2004sketching}). Furthermore, our results uncover a qualitative difference between learning complexity of the contiguous versus the arbitrary. When the classifying subsequence is required to occur in \emph{consecutive} locations within the string, the VC dimension of the classification problem was shown by \cite{golovnev2019string} to be uniformly upper bounded by $O(\log n)$ (regardless of the length of the subsequence $k$). In contrast, when the contiguity requirement is dropped, the VC dimension of subsequence classifiers turns out to be $\Omega(k)$ -- which can be exponentially larger than $\log n$ (for $k=\Omega(n)$). Furthermore, this lower bound holds even if we restrict the occurrence of the classifying substring to have gaps not larger than one; every two characters of the subsequence either appear next to one another or are separated by at most one character. We should remark that we make no attempt to optimize constant factors: we are primarily concerned with the \emph{asymptotic} complexity in the learning setting.

\subsection{Definitions}

\begin{definition}[Alphabets]
Let an \emph{alphabet} $\Sigma$ be any set of symbols (for instance, the binary alphabet $\{0,1\}$). Then, we denote $\Sigma^n$ the set of length-$n$ strings over $\Sigma$. Sometimes we consider strings of \emph{arbitrary} length by $\Sigma^*$.
\end{definition}

\begin{definition}[Subsequence Detection]
\label{def:SSD}
For $n \geq 1$ and alphabet $\Sigma$, define the Boolean function $\SSD^n(x,y)$ whose inputs are strings $x \in \Sigma^n$ and $y \in \Sigma^{*}$ and whose output is $1$ if and only if $y$ is a subsequence of $x$. 

We also define $\SSD^n_k$ (for a positive integer $k \leq n$) where $y$ is guaranteed to belong $\Sigma^k$. 
\end{definition}

\begin{assumption}
When considering strings of arbitrary length, we assume that the length of $y$ does not exceed the length of $x$. Clearly, if the length of $y$ exceeds the length of $x$, then $y$ cannot be a subsequence of $x$. 
\end{assumption}

\begin{assumption}
We will always assume that the alphabet size does not exceed the lengths of the strings (i.e. $|\Sigma| \leq n$) as a string of length $n$ can contain at most $n$ unique symbols. However, we will generally assume a binary alphabet unless otherwise specified. 
\end{assumption}

\begin{example}
\[
\begin{array}{c}
    \SSD^3(010,00) = 1, \\
    \SSD^6(101010, 111) = 1,\\
    \SSD^6_3(120021,211) = 0.
\end{array}
\]
\end{example}

\begin{remark}
A natural question to ask is whether $\SSD^n_k$ belongs to AC$_0$. Namely, whether it can be computed by a Boolean circuit with $\land,\lor$ and $\neg$ gates of polynomial size in $n$ and constant depth. The answer is negative:

\begin{prop}
For all $k \geq 1$, $\SSD^{2k}_{k+1}$ is not in AC$_0$.
\end{prop}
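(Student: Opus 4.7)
The plan is to exhibit a known AC$_0$-hard function as a restriction of $\SSD_{2k,k+1}$, which suffices since AC$_0$ is closed under substituting constants for input variables. Concretely, I would hard-code the pattern $y = 1^{k+1}$ and examine the residual function on the remaining input $x \in \{0,1\}^{2k}$.

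First, I would verify the restricted function. The string $x$ contains $1^{k+1}$ as a subsequence if and only if one can find $k+1$ positions $i_1 < \cdots < i_{k+1}$ with $x_{i_j} = 1$, which is equivalent to saying that the Hamming weight of $x$ is at least $k+1$. Since $|x| = 2k$, this is exactly the MAJORITY function on $2k$ bits (in its strict formulation, asking for more $1$s than $0$s). So
\[
\SSD_{2k,k+1}(x, 1^{k+1}) \;=\; \mathrm{MAJ}_{2k}(x).
\]

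Second, I would invoke the fact that if $\SSD_{2k,k+1}$ were computed by a polynomial-size, constant-depth circuit in $n=2k$, substituting the constants $1^{k+1}$ for the bits of $y$ produces a polynomial-size, constant-depth circuit computing $\mathrm{MAJ}_{2k}$. But by the classical lower bounds of Furst, Saxe, and Sipser, later sharpened by Hastad, MAJORITY is not in AC$_0$ (this follows, for example, from the fact that PARITY reduces to MAJORITY via AC$_0$ reductions, and PARITY is not in AC$_0$). The desired contradiction follows.

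I do not anticipate any real obstacle: the content is just identifying the right restriction. The only minor subtlety is making sure the resulting threshold has the correct parameters — namely, that the threshold $k+1$ on $2k$ variables is genuinely the majority function rather than some skewed threshold that might be easier — but with $k+1 > 2k/2$ this is indeed strict majority on $2k$ bits, which is already outside AC$_0$.
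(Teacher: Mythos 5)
Your proposal is correct and follows exactly the paper's argument: fix $y = 1^{k+1}$, observe that the restriction of $\SSD_{2k,k+1}$ computes MAJORITY on $2k$ bits, and conclude via the classical lower bound since AC$_0$ is closed under restrictions. The additional care you take in checking that the threshold $k+1$ out of $2k$ is genuinely strict majority is a sensible elaboration of the same one-line idea.
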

\begin{proof}
Set $y$ to equal $1^{k+1}$ simplifies $\SSD^{2k}_{k+1}$ to the MAJORITY Boolean function which is known not to belong to AC$_0$ (as stated by \cite{jukna2012boolean}). 

\end{proof}
\end{remark}

\subsubsection{Communication Complexity}

We are mainly interested in the \emph{communication} required to compute $\SSD^n$. We now review the relevant definitions from Communication Complexity by \cite{kushilevitz1996communication}.

\begin{definition}[Communication Protocol]
\label{def:protocol}
Let $f : \mathcal{X} \times \mathcal{Y} \to \{0,1\}$. Suppose Alice and Bob are two players holding inputs $x \in \mathcal{X}$ and $y \in \mathcal{Y}$ respectively, with the goal of computing $f(x,y)$. A {\it communication protocol} is a 1-bit message-passing protocol between Alice and Bob. The {\it cost} of a protocol is the maximum number of messages used to compute $f$ over all inputs $(x,y)$.
\end{definition}

\begin{definition}[Communication Complexity]
Let $f : \mathcal{X} \times \mathcal{Y} \to \{0,1\}$. The {\it deterministic communication complexity} of $f$, $D(f)$, is the minimal cost of a deterministic protocol that computes $f$. The {\it randomized communication complexity} of $f$, $R(f)$, is the minimal cost of a randomized protocol that computes $f$ with error probability at most $1/3$.
\end{definition}

Of primary interest in this paper is the function $f$ whose inputs are sets $A,B \subseteq [n]$ and whose value is equal to $1$ if and only if $A$ and $B$ are \emph{disjoint}.

\begin{definition}[Disjointness]
Define $\DIS^n$ as the set-disjointness problem. Given subsets $A,B \subseteq [n]$ respectively, Alice and Bob must determine whether $A$ and $B$ are disjoint. We encode subsets of $[n]$ as their characteristic vectors in $\{0,1\}^n$. Then, $\DIS^n(a,b)$ is defined as the Boolean function whose inputs are characteristic vectors $a,b \in \{0,1\}^n$ and whose output is 1 if and only if the corresponding subsets are disjoint.

As in the definition of $\SSD$, we also consider a restricted variation. Define $\DIS^n_k$ as the problem of set disjointness when $|A| = |B| = k$. Namely, it is the same as $\DIS^n$, with the restriction that both $a$ and $b$ have Hamming weight $k$. We always assume Alice gets $a$ and Bob gets $b$.
\end{definition}

\begin{theorem}[\cite{razborov1990distributional}~\cite{haastad2007randomized}]
\label{th:DISLB}
For all $n \geq k \geq 0$,
\begin{enumerate}
    \item $R(\DIS^n) = \Theta(n)$.
    \item $R(\DIS^n_k) = \Theta(k)$ for every $k \leq n/2$.
    \item $D(\DIS^n_k) = \displaystyle \Theta \left(\log \binom{n}{k} \right)$ for every $k \leq n/2$.
\end{enumerate}
\end{theorem}

Several results in this work rely on \emph{reductions}. A reduction is a high-level way to relate two different problems by transforming one into the other. For our purposes, we define a reduction as follows.

\begin{definition}[Reduction]
Let $f : \mathcal{X} \times \mathcal{Y} \to \{0,1\}$ and $g : \mathcal{A} \times \mathcal{B} \to \{0,1\}$ be two communication problems. We say $f$ \emph{reduces to} $g$ if there exists mappings $\rho: \XX \to \AA$ and $\phi: \YY \to \BB$ such that 
\[
g(\rho(x),\phi(y)) = f(x,y) \text{ for all $(x,y) \in \XX \times \YY$.}
\]
We call the reduction \emph{injective} if $\rho$ and $\phi$ are both injective (i.e. $\rho(x) = \rho(y)$ if and only if $x=y$).
\end{definition}

For instance, we show later that $\DIS$ reduces to $\SSD$. It follows that if we would have a communication protocol $P$ for $\SSD$ then we would have a protocol $Q$ for $\DIS$ with the same cost as of $P$. However, $Q$ cannot contradict existing lower bounds on $\DIS$. This means that $\SSD$ is \emph{at least as ``hard"} as $\DIS$, ignoring technical details discussed in Proposition \ref{prop:redDIS}.

\begin{definition}[Bi-partition]
For any communication problem, the \emph{bi-partition} describes ``who gets what" with respect to the input bits. We mainly focus on the \emph{natural} bi-partition in which Alice holds $x$ and Bob holds $y$. A more general version of this communication problem gives both parties complementary partitions of both $x$ and $y$. For example, Alice may receive the odd-indexed bits of $x$ while Bob receives the even-indexed bits.

We consider both natural and worst-case bi-partitions, and the partition under consideration will always be clear from the context. We sometimes consider a protocol for every possible bi-partition. In this case, the cost of the protocol is the maximal cost over all possible bi-partitions and inputs. 
\end{definition}

\begin{definition}[Communication Matrix]
Let $f: \XX \times \YY \to \{0,1\}$. The {\it communication matrix} $M_f$ is the $|\mathcal{X}| \times |\mathcal{Y}|$ matrix with $(M_f)_{x,y}= f(x,y)$.

For $n \geq k \geq 1$ and alphabet $\Sigma$, we denote the communication matrix $\Sigma^{n \times k} := M_{\SSD^n_k}$. This is a $|\Sigma|^n \times |\Sigma|^k$ binary matrix with 
\[
(\Sigma^{n \times k})_{x,y} = \SSD^n_k(x,y).
\]
\end{definition}

\begin{example}
Let $\Sigma = \{0,1\}$. Then 
\[
\Sigma^{3 \times 2} = 
\left[\begin{array}{c|cccc}
\ & 00 & 01 & 10 & 11  \\
\hline
000 & 1 & 0 & 0 & 0 \\
001 & 1 & 1 & 0 & 0 \\
010 & 1 & 1 & 1 & 0 \\
011 & 0 & 1 & 0 & 1 \\
100 & 1 & 0 & 1 & 0 \\
101 & 0 & 1 & 1 & 1 \\
110 & 0 & 0 & 1 & 1 \\
111 & 0 & 0 & 0 & 1  
\end{array}\right]
\]
\end{example}

\subsubsection{Learning Complexity}

We also consider subsequence containment as a \emph{learning problem}. See \cite{shalev2014understanding} for a more detailed discussion on statistical learning.

\begin{definition}[Subsequence classifiers]
\label{def:subclassifier}
Let $\HH^n_k$ denote the \emph{hypothesis class} of length-$k$ subsequence classifiers acting on strings of length $n$. That is, the collection of functions $\{h_y : y \in \{0,1\}^k\}$ where $h_y : \{0,1\}^n \to \{0,1\}$ such that $h_y(x)=1$ if and only if $y$ is a subsequence of $x$.

Also let $\HH^{n}_*$ denote the collection of subsequence classifiers of \emph{any length} acting on strings of length $n$.
\end{definition}

\begin{definition}[Supersequence classifiers]
\label{def:supclassifier}
Let $\GG^{n}_k$ denote the hypothesis class of length-$n$ supersequence classifiers acting on strings of length $k$. That is, the collection of functions $\{g_x : x \in \{0,1\}^n\}$ where $g_x : \{0,1\}^{k} \to \{0,1\}$ such that $g_x(y)=1$ if and only if $y$ is a subsequence of $x$.

Also let $\GG^n_*$ denote the collection of length-$n$ supersequence classifiers acting on strings of any length.
\end{definition}

\begin{definition}[PAC learning]
Let $\XX = \{0,1\}^n$ be \emph{labelled} by a fixed $h_y \in \HH^n_k$ (which is unknown to the learner). Then, given a distribution $\DD$ over $\XX$, the \emph{loss} of a hypothesis $h_z$ is equal to
\[
L_\DD(h_z) = \PP_{x \sim \DD}(h_y(x) \neq h_z(x)).
\]

A learning algorithm $\AA$ is said to $(\epsilon,\delta)$-\emph{PAC}-learn $\HH^n_k$ if for every distribution $\DD$ over $\mathbf{X}$ and every $h_y \in \HH^n_k$, there exists a \emph{sample size} $N$ such that the following holds\footnote{We focus on the realizable case: For a definition of agnostic PAC-learning please see~\cite{shalev2014understanding}}:

Given $N$ i.i.d samples $\{(x_1,h_y(x_1)),\cdots,(x_N,h_y(x_N))\}$ from $\DD$ as input, $\AA$ outputs (with probability $1-\delta$) a hypothesis $h_z \in \HH^n_k$ with $L_\DD(h_z) < \epsilon$. 
\end{definition}

Finally, we introduce the \emph{Vapnik–Chervonenkis dimension}, a parameter often relevant to statistical learning which is known to be strongly connected to communication complexity (as in \cite{kremer1999randomized}). 

\begin{definition}[VC dimension]
For a finite set $A$, let $\HH$ be any collection of functions $f: A \to \{0,1\}$. Then, a subset $B \subseteq A$ is \emph{shattered} by $\HH$ if for every subset $B' \subseteq B$ there exists a function $f_{B'} \in \HH$ which \emph{realizes} $B'$: For each $b \in B$, $f_{B'}(b)=1$ iff $b \in B'$. The VC dimension of $\HH$, denoted by $\VCdim(\HH)$, is the largest size of a subset of $A$ that is shattered by $\HH$.

For a communication problem $f : \XX \times \YY \to \{0,1\}$, we define $\VCdim(f_\XX)$ as the VC dimension of the hypothesis class parameterized by $\XX$. That is, the collection of functions $\HH_\XX := \{h_x : x \in \XX\}$ such that $h_x(y) = f(x,y)$.
\end{definition}

VC dimension essentially characterizes the number of samples needed to PAC-learn a family of classifiers. 

\begin{theorem}[\cite{hanneke2016optimal,ehrenfeucht1989general}]
\label{th:VCsample}
For any hypothesis class $\HH$, the sample size required to PAC-learn $\HH$ is equal to $\Theta \left( \VCdim (\HH) \right)$. (Here, $\Theta$ hides polynomial dependency of $1/\epsilon$, $1/\delta$, and indicates that the upper bound on the asymptotic learning complexity is tight.)
\end{theorem}

\subsection{Summary of results}

Here we give an informal overview of our results. Firstly, we obtain tight bounds on the deterministic communication complexity of subsequence detection.

\begin{theorem}
\label{th:detCCnat}
For all $n \geq k \geq 0$ and alphabet $\Sigma$, under the natural bi-partition of inputs,
\begin{itemize}
    \item $D(\SSD^n) = n \log |\Sigma|$
    \item $D(\SSD^n_k) = k \log |\Sigma|$
\end{itemize}
That is, the trivial protocol where Bob sends all of $y$ to Alice is optimal.
\end{theorem}

\begin{proof}[Proof Idea]
    
We simply apply the standard \emph{rank argument} to obtain the lower bounds above. The corresponding upper bounds are achieved by the trivial deterministic protocol in both cases.

\end{proof}

\medskip
Next, we obtain similarly tight bounds when the inputs are partitioned arbitrarily between Alice and Bob.

\begin{theorem}
\label{th:detCCarb}
For all $n \geq k \geq 0$ and alphabet $\Sigma$, under an arbitrary bi-partition of inputs,
\begin{itemize}
    \item $D(\SSD^n_k) = \Theta(k \log n)$
\end{itemize}
\end{theorem}

\begin{proof}[Proof Idea]
    
For the lower bound, we make our first \emph{reduction} to disjointness. In particular, we prove that, for any pair of $k$-subsets, $A, B \in \binom{\{1,\cdots,n\}}{k}$, it is possible to construct $x \in \{0,1\}^{3n}$ and $y \in \{0,1\}^{4k}$ such that $y$ is a subsequence of $x$ if and only if $A$ is disjoint from $B$. 

\medskip
In the communication setting, $A$ and $B$ are inputs to the disjointness problem held by Alice and Bob respectively (i.e., the natural bi-partition). However, the characters of $x$ are constructed depending on both $A$ and $B$. As a result, both Alice and Bob both will know some (but not all) of the characters of $x$.

\medskip
Consider any protocol $P$ for the subsequence detection problem under the corresponding bi-partition, and let $A$, $B$ be inputs to the disjointness problem. Alice and Bob may solve the disjointness problem by constructing $x$ and $y$ as above and returning the result of $P(x,y)$. Importantly, $P$ cannot contradict theoretical lower bounds for disjointness, so we obtain a lower bound on the communication complexity of subsequence detection.

\medskip
Due to the non-natural bi-partition, upper bounds are no longer achieved by the trivial deterministic protocol, as this would have cost $O(n)$. Rather, Alice and Bob will iterate over the characters of $y$ and exchange the first \emph{index} where each character occurs, with total cost $O(k \log n)$.

\end{proof}

After this, we give another reduction to disjointness which \emph{preserves} the natural bi-partition, and gives us randomized lower bounds to match Theorem \ref{th:detCCnat}. 

\begin{theorem}
\label{th:rCCnat}
For $n \geq k \geq 0$, under the natural bi-partition of inputs,
\begin{enumerate}
    \item $R(\SSD^n) = \Theta(n)$
    \item $R(\SSD^n_k) = \Theta(k)$
\end{enumerate}
\end{theorem}

\begin{proof}[Proof Idea]
Lower bounds follow in the same manner as above, but with a slightly more involved construction which does preserve the natural bi-partition of inputs. The cost of controlling the bi-partition, unfortunately, is that the length of $y$ is quite large (at least half the length of $x$). However, this issue can be solved with an elementary padding argument which adds an arbitrary number of inconsequential bits to the end of $x$.

\end{proof}

The methods used in the proof of Theorem \ref{th:rCCnat} are also useful in understanding the VC dimension of subsequence classifiers. Importantly, the given reduction from disjointness is \emph{injective}, which enables us to use the following lemma.

\begin{lemma}
\label{lem:VCred}
Let $f : \XX \times \YY \to \{0,1\}$ and $g: \AA \times \BB \to \{0,1\}$ be two communication problems and suppose $f$ has an \textbf{injective} reduction to $g$. Then, 
\begin{enumerate}
    \item $\displaystyle \VCdim(f_\XX) \leq \VCdim(g_\AA)$.
    \item $\displaystyle \VCdim(f_\YY) \leq \VCdim(g_\BB)$.
\end{enumerate}
\end{lemma}

\begin{proof}[Proof Idea]
Let $(\phi,\rho)$ reduce $f$ to $g$ and let $S \subseteq \XX$ be shattered by $f_\XX$. Then, if $\rho$ is injective, it is not hard to see that $\phi(S)$ is also shattered by $g_\AA$, but may have cardinality less than $S$. However, if $\phi$ is also injective, then $|\phi(S)|=|S|$, so shattered sets map to shattered sets of equal size.

\end{proof}

From here it is fairly straightforward to derive tight bounds on the VC dimension of subsequence classifiers, which is our last main result.

\begin{theorem}
\label{th:VCsub}
For all $n \geq \frac65 k \geq 0$,
\begin{enumerate}
    \item $\displaystyle \frac{n}{3} \leq \VCdim(\HH^{n}) \leq n+1$.
    \item $\displaystyle \frac{k}{5} \leq \VCdim(\HH^{n}_{k}) \leq k$.
\end{enumerate}
\end{theorem}

\subsection{Related work}

\subsubsection{Contiguous pattern matching}
In the classical pattern matching problem, we seek to determine whether a string $y$ of length $k$ appears in \emph{contiguous} locations in a string of length $n \geq k$. Let $\SM^n_k$ denote the contiguous string-matching problem. For $k \leq \sqrt{n}$ and arbitrary partitions, \cite{golovnev2019string} prove an upper bound of $D(\SM^n_k) = O(n / k \cdot \log k)$ and a lower bound of $R(\SM^n_k) = \Omega( n / k  \cdot \log \log k)$ bits of communication. We prove significantly smaller bounds for the communication complexity of non-contiguous pattern matching. 

\subsubsection{Non-contiguous pattern matching}
\cite{sun2007communication} and \cite{liben2006finding} proved tight lower bounds for the communication complexity of the $\LCSk$ problem of determining whether two strings of length $n$ have a common subsequence of length $k$ or greater. For example, \cite{sun2007communication} prove that $R(\LCSk) = \Omega(n)$. These works are different than ours as they consider sequences with arbitrary alphabets and we focus on fixed alphabets allowing us to circumvent their strong lower bounds. Additionally, we focus on detecting a subsequence of length $k$ in a string of length $n$ whereas these works focus on computing the largest length of a common subsequence in two strings of length $n$. Consequently, our proof ideas differ from those by~\cite{sun2007communication,liben2006finding}.

Lower bounds on the query complexity of one-sided testers for subsequence-freeness were devised recently by~\cite{ron2021optimal}. While lower bounds for query complexity of testing algorithms were used by \cite{goldreich1998property} to derive lower bounds on VC-dimension, their lower bounds as well as those of \cite{ron2021optimal} do not seem to imply our lower bounds for the VC dimension of classifiers based on the inclusion of a fixed pattern as a subsequence. This is because the lower bound proven by~\cite{ron2021optimal} applies to testers with \emph{one-sided} error and arbitrary alphabets as opposed to our setting where binary sequences are concerned. 

The \emph{deletion channel} takes a binary string as input and independently deletes each bit with fixed probability $d$  (See \cite{janson2021hidden} for a detailed analysis). It was proven by \cite{drmota2012mutual} that the problem of determining the capacity of the deletion channel can be exactly formulated as the subsequence detection problem.

\cite{bringmann2018sketching} show an $\Omega((k/{|\Sigma|})^{|\Sigma|})$ lower bound on the \emph{one-way} communication complexity of subsequence detection. Additionally, they construct a sketch of size $O(k^{|\Sigma|} \log k)$, showing the lower bound is nearly tight.  

\subsubsection{Reconstructing from subsequences}
The problem of reconstructing strings from their subsequences has been previously studied, initiated by \cite{manvel1991reconstruction} and subsequently expanded on by \cite{scott1997reconstructing, dudik2003reconstruction,acharya2015string} which give various conditions on when a string can be reconstructed from its $k$-subsequence decomposition. Our problem differs from the reconstruction problem studied in these works. For example, these works all consider the \emph{multiset}-decomposition of subsequences which includes the multiplicities of each subsequence whereas we only consider the \emph{set}-decomposition for the purposes of subsequence detection.

\subsubsection{VC dimension}
That the \emph{one-way} randomized communication complexity of $f$ is greater than or equal to its VC dimension was observed by \cite{kremer1999randomized}. The crux of our discussion on sample complexity is the exponential gap between the VC dimension of \emph{contiguous} and \emph{non-contiguous} subsequence containment. In particular, \cite{golovnev2019string} showed recently that the contiguous string-matching problem has VC dimension $\Theta(\log(n))$, whereas we prove a linear lower bound on the VC dimension of non-contiguous subsequences.

\section{Communication complexity}

\subsection{Deterministic protocols for the natural bi-partition}

We now consider the natural bi-partition in which Alice holds $x$ and Bob holds $y$. First, we lower bound the deterministic communication complexity of subsequence detection using the well-known \emph{log-rank method}:

\begin{theorem}[\cite{kushilevitz1996communication}]
\label{th:logrank}
For any function $f : \XX \times \YY \to \{0,1\}$,
\[
D(f) \geq \log_2(\rank M_f),
\]
where $\rank M_f$ is equal to the number of linearly independent rows (or columns) of $M_f$.
\end{theorem}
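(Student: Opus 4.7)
The plan is to use the standard correspondence between deterministic protocols and partitions of $\XX \times \YY$ into combinatorial rectangles, and then bound $\rank M_f$ by the number of $1$-rectangles. Concretely, let $\Pi$ be a deterministic protocol computing $f$ with cost $c$, and view $\Pi$ as a binary tree in which each internal node is labeled by the party who speaks there and a function from that party's input to $\{0,1\}$, and each leaf is labeled with an output bit.

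First I would prove, by induction on the depth of a node $v$ in the protocol tree, that the set $R_v \subseteq \XX \times \YY$ of input pairs that reach $v$ is a combinatorial rectangle $A_v \times B_v$ with $A_v \subseteq \XX$ and $B_v \subseteq \YY$. The base case $v = \text{root}$ gives $R_v = \XX \times \YY$. For the inductive step, if $v$ has children $v_0, v_1$ and at $v$ Alice speaks according to $a \colon \XX \to \{0,1\}$, then $A_{v_b} = \{x \in A_v : a(x) = b\}$ and $B_{v_b} = B_v$, so $R_{v_b}$ is still a rectangle; the case in which Bob speaks is symmetric. Since $\Pi$ has cost $c$ it has at most $2^c$ leaves, so the leaves induce a partition of $\XX \times \YY$ into at most $2^c$ rectangles, each of which is monochromatic under $f$ (because the protocol outputs one bit at each leaf and must equal $f$ on all inputs routed there).

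Next I would write $M_f = \sum_{i=1}^{t} \mathbbm{1}_{A_i} \mathbbm{1}_{B_i}^{\top}$, where $A_1 \times B_1, \ldots, A_t \times B_t$ are the $1$-labeled leaf rectangles and $\mathbbm{1}_S$ denotes the characteristic column vector of $S$. Each summand is an outer product of two $\{0,1\}$-vectors and therefore has rank at most $1$, so by subadditivity of rank
\[
\rank M_f \;\le\; \sum_{i=1}^{t} \rank\!\bigl(\mathbbm{1}_{A_i} \mathbbm{1}_{B_i}^{\top}\bigr) \;\le\; t \;\le\; 2^{c}.
\]
Taking logarithms yields $c \ge \log_2 (\rank M_f)$, and since $\Pi$ was an arbitrary optimal protocol we conclude $D(f) \ge \log_2 (\rank M_f)$.

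The main obstacle is the rectangle-partition lemma in the first step. The subtlety is that the speaker and the message rule at any internal node may depend on the entire transcript so far, not just on the depth, so one must be careful that the inductive invariant ``$R_v$ is a rectangle'' is genuinely preserved. The key point is that when node $v$ is reached the speaker's next bit is a function of only \emph{their own} input restricted to $A_v$ (resp.\ $B_v$), which is precisely what makes the split refine only one of the two coordinates and keeps $R_{v_b}$ a product set. Once this lemma is in place, the remaining linear-algebraic step is essentially immediate.
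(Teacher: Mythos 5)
Your proof is correct and is the standard protocol-tree/monochromatic-rectangle argument from the cited reference \cite{kushilevitz1996communication}; the paper itself states this theorem as an imported result and gives no proof of its own, so there is nothing to compare against beyond noting that your argument is exactly the textbook one. Both steps (the rectangle-partition lemma with at most $2^c$ leaves, and the rank-subadditivity bound over the $1$-rectangles) are sound as written.
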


The rank of the communication matrix for subsequence detection, $\Sigma^{n \times k}$, is fairly simple to compute.

\begin{lemma}
\label{prop:detCC}
For all $n \geq k \geq 1$,
\[
\rank(\Sigma^{n \times k}) = |\Sigma|^k
\]
\end{lemma}

\begin{proof}
For simplicity, we assume $\Sigma = \{0,1,\cdots,m\}$ and index the rows and columns of $\Sigma^{n \times k}$ lexicographically (by the sequence $[0^n, 0^{n-1}1, \cdots, m^n]$). Then, it is clear that a string $s \in \Sigma^k$ does not appear as a subsequence in $\Sigma^n$ until the $s$'th string, $0^{n-k}s$, where it appears as a contiguous subsequence. Thus
\begin{itemize}
    \item $i < j \implies (\Sigma^{n \times k})_{i,j} = 0$.
    \item $i = j \implies (\Sigma^{n \times k})_{i,j} = 1$.
\end{itemize}
In particular, the first $|\Sigma|^k$ rows of $\Sigma^{n \times k}$ are a full-rank lower-triangular matrix. 

\end{proof}

\subsubsection{Proof of Theorem \ref{th:detCCnat}}
\begin{proof}

The second lower bound immediately follows from Theorem \ref{th:logrank} applied to Lemma \ref{prop:detCC}:
\[
D(\SSD^n_k) \geq \log \rank(\Sigma^{n \times k}) = k \log |\Sigma|.
\]
The first lower bound follows from the fact that the communication matrix for $\SSD^n$ contains each $\Sigma^{n \times k}$ as a sub-matrix (for every $k \leq n$) and thus has rank equal to $n$.

\medskip
It is easy to achieve these bounds under the natural bi-partition; Bob sends Alice every character of $y$, each requiring $\log |\Sigma|$ bits.  

\end{proof}

\begin{remark} In fact, the same bounds apply to $\SM^n_k$, the contiguous string-matching problem, because $s$ appears contiguously in $0^{n-k}s$. 
\end{remark}

\begin{cor}
For all $n \geq k \geq 1$ and alphabet $\Sigma$, under the natural bi-partition of inputs,
\[
D(\SM^n_k) = k \log |\Sigma|.
\]
\end{cor}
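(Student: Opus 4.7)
The plan is to combine Proposition \ref{prop:rIND} with the Kremer randomized lower bound $R(\IND_k) = \Omega(\log k)$. The key observation is that the reduction in Proposition \ref{prop:rIND} is \emph{communication-free}: Alice builds $x'$ from her input $x$ entirely locally, Bob builds $y'$ from his input $i$ entirely locally, and no messages are exchanged outside the call to the assumed $\SSD$ protocol. Because of this, any randomized protocol for $\SSD_{2k, k+1}$ can be used as a black box to solve $\IND_k$ at exactly the same cost and the same error probability.

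Concretely, suppose $Q$ is any randomized protocol for $\SSD_{2k, k+1}$ under the natural bi-partition with error probability at most $1/3$. The protocol in which Alice and Bob first build $x'$ and $y'$ locally and then run $Q$ on $(x', y')$ is a randomized protocol for $\IND_k$ with identical cost and identical error probability; hence $R(\SSD_{2k, k+1}) \geq R(\IND_k) = \Omega(\log k)$. Reindexing — letting the subsequence length $\ell := k+1$ play the role of the $k$ appearing in the corollary — gives $R(\SSD_{2(\ell-1), \ell}) = \Omega(\log(\ell - 1)) = \Omega(\log \ell)$ for $\ell \geq 2$, which establishes the corollary on the family $n = 2(k-1)$. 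The case $k = 1$ is vacuous since $\log 1 = 0$.

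The main obstacle in extending this uniformly to \emph{all} $n \geq k \geq 1$ is the usual difficulty of padding under a binary alphabet: one would like to reduce $\SSD_{2(k-1), k}$ to $\SSD_{n, k}$ for $n > 2(k-1)$ by appending a padding character to $x'$, but neither $0$ nor $1$ is safe — the hard pattern $y'$ contains both symbols, so appending either can create a spurious embedding and convert a NO instance into a YES instance. Over alphabets of size at least $3$ one could pad with a fresh symbol, but this option is unavailable in binary, which is presumably why the authors flag in the introduction that their lower bounds hold only for specific values of $k$. For the purposes of the corollary it therefore suffices that the $\Omega(\log k)$ bound be witnessed on the infinite family $(n, k) = (2(k-1), k)$ produced by the reduction.
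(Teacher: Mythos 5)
Your proposal is correct and follows the same route the paper intends: the corollary is obtained by composing the communication-free reduction of Proposition \ref{prop:rIND} with the $\Omega(\log k)$ randomized lower bound for $\IND_k$, exactly as you describe. Your added remarks on reindexing and on the padding obstruction are consistent with the paper's own caveat that its lower bounds are witnessed only at specific values of $(n,k)$.
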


\subsection{Deterministic protocols for arbitrary bi-partitions}

We are also able to tightly bound the deterministic communication complexity of subsequence detection under the \emph{worst-case} bi-partition via a reduction from disjointness.

\begin{prop}
\label{prop:rDISnk}
For all $n \geq k \geq 1$, there exists a bi-partition $B$ such that $\DIS^n_k$ (under the natural bi-partition) reduces to $\SSD^{3n}_{4k}$ under $B$.
\end{prop}
\begin{proof}
Given inputs $a,b \in \{0,1\}^n$ of $\DIS^n_k$ to Alice and Bob, consider the following inputs to $\SSD^{3n}_{4k}$:
\begin{itemize}
    \item $y = 1010 \cdots 10=(10)^{2k}$,
    \item $x = a_1b_1 0 a_2b_2 0 \cdots a_nb_n 0 = (a_ib_i0)_{1 \leq i \leq n}$.
\end{itemize}

This induces the bi-partition of inputs to $\SSD^{3n}_{4k}$ which has Alice hold the $a_i$'s and Bob hold the $b_i$'s. The remaining bits can be partitioned arbitrarily, or even known to both parties simultaneously.

We note that both $a$ and $b$ contain exactly $k$ 1's each. Thus there are $2k$ ``isolated" $1$'s in $x$ (i.e. $y$ is a subsequence) if and only if $a$ and $b$ are disjoint. This completes the proof.

\end{proof}

\subsubsection{Proof of Theorem \ref{th:detCCarb}}

We first state the deterministic lower bound which follows from Proposition \ref{prop:rDISnk}.

\begin{lemma}
\label{prop:reductionLB}
There is a bi-partition of inputs such that
\[
D(\SSD^{n}_{k}) = \Omega \left(\log \binom{n}{k} \right) \text{ for every $k \leq n/2$.}
\]
\end{lemma}
\begin{proof}
A reduction from $\DIS^{n}_{k}$ to $\SSD^{3n}_{4k}$ implies that $\SSD^{3n}_{4k}$ must obey the same communication lower bounds as in Theorem \ref{th:DISLB}.

\end{proof}

With this lower bound in hand, we are ready to prove Theorem \ref{th:detCCarb} by defining a deterministic protocol whose cost is (nearly) asymptotically tight.

\begin{proof}
Alice and Bob first exchange $y$ requiring $O(k \log |\Sigma|)$ bits. Then they compute $i$, the first index in which $x_{i} = y_1$, requiring $O(\log n)$ bits (by exchanging an integer less than or equal to $n$). If there is no such index, then $y$ is not a subsequence of $x$. Otherwise, this reduces to an instance of $\SSD^{n-i}_{k-1}$ with input $x' := x_{i+1}x_{i+2}\cdots x_n$, and $y' = y_2y_3\cdots y_k$. The bi-partition of inputs remains unchanged, although exchanging $y$ is no longer required.

\medskip
Continuing iteratively, we have $D(\SSD^{n}_{k}) = O(k \log |\Sigma| + k \log n) = O(k \log n)$ if we assume $|\Sigma| \leq n$ as in Definition \ref{def:SSD}. This achieves the lower bound in Lemma \ref{prop:reductionLB}, up to a difference of $O(k \log k)$, as
\[
\log \binom{n}{k}  = O \left(\log \left(\frac{n}{k} \right)^k \right) = O\left( k \log n - k \log k \right)
\]
\end{proof}

\subsection{Randomized protocols for the natural bi-partition}

We now give a similar reduction from disjointness which proves a \emph{randomized} lower bound for subsequence detection under the natural bi-partition.

\begin{prop}
\label{prop:redDIS}
$\DIS^n_k$ (injectively) reduces to $\SSD^{3n}_{2n+k}$ under the natural bi-partition.
\end{prop}

\begin{proof}
Let Alice and Bob hold $a,b \in \{0,1\}^n$ respectively. Alice and Bob each construct (without communication) strings $x$ and $y$, which both consist of $n$ ``blocks". Each block will consist of either two or three bits. 

\begin{itemize}
    \item Alice constructs block $i$ equal to $ a_i\overline{a}_i 0$ (i.e. $0 \mapsto 010$ and $1 \mapsto 100$). That is,
    \[
    x =  a_1\overline{a}_1 0 \cdot a_2\overline{a}_2 0  \cdots a_n \overline{a}_n 0
    \]
    
    \item Bob constructs block $i$ equal to $00$ if $b_i=0$, and $010$ if $b_i=1$ (i.e.  $0 \mapsto 00$ and $1 \mapsto 010$). Supposing $b$ contains $k$ ones appearing at indices $i_1 < \cdots < i_k$, then 
    \[
    y = 0^{2(i_1-1)} \cdot 010 \cdot 0^{2(i_2-i_1-1)} \cdot 010 \cdots 0^{2(i_k-i_{k-1}-1)} \cdot 010 \cdot 0^{2(n-i_k-1)}
    \]
    \end{itemize}

If $a$ and $b$ are disjoint, then the $i$'th block of $y$ is a subsequence of the $i$'th block of $x$ for all $i$. Thus, $y$ is a subsequence of $x$. Otherwise, there exists some index $i$ with $a_{i}=b_{i}=1$. Let $\alpha,\beta,\gamma,\delta$ partition $x$ and $y$ around the $i$'th block as follows.
\begin{align*}
    x = \alpha \cdot 1&00 \cdot \gamma \\
    y = \beta \cdot 0&10 \cdot \delta
\end{align*}

Note that both $\alpha$ and $\beta$ contain exactly $2i-2$ zeros, and both terminate in a $0$. Thus, $010 \cdot \delta$ must be a subsequence of $100 \cdot \gamma$. Furthermore, $\gamma$ and $\delta$ contain exactly $n-i$ zeros. If $y$ was a subsequence of $x$, then $010$ must be a subsequence of $100$, which is not the case. Thus, $y$ is not a subsequence of $x$.

As $x$ has length $3n$ and $y$ has length $2n+k$, we have proven that $\DIS^{n}_k(a,b) = \SSD^{3n}_{2n+k}(x,y)$. As every $a$ maps to a unique $x$ (and similarly $b$ to $y$), the reduction is injective, concluding the proof.

\end{proof}

\begin{table}[H]
\caption{
An example instance of the reduction when $a$ and $b$ are not disjoint. Note that there are precisely two zeros in every cell of $x$ and $y$. Thus, for $y$ to be a subsequence of $x$, every zero in $y$ must match a zero in $x$ \emph{in the same column}. However, we cannot match both zeros in the red column because we must also match the bold ``1".}
\begin{center}
\begin{tabularx}{\textwidth} { 
| >{\hsize=.25\hsize \raggedright}X  
| >{\hsize=.25\hsize \centering }X 
| >{\hsize=.25\hsize \centering }X 
| >{\hsize=.25\hsize \centering }X
| >{\hsize=.25\hsize \centering }X
| >{\hsize=.25\hsize \centering }X
| >{\hsize=.25\hsize \centering }X
| >{\hsize=.25\hsize \centering }X
| >{\hsize=.25\hsize \centering }X
| >{\hsize=.25\hsize \centering }X
| >{\hsize=.25\hsize \centering }X
| >{\hsize=.25\hsize \centering \arraybackslash }X | }
\hline
$a$ & 0 & 1 & 1 & 0 & 0 &  1 & 1 & 0 & 1 & 1 & 0\\
\hline 
\rowcolor{gray!20} \cellcolor{white} $x(a)$ &  010 & 100 & 100 & 010 & 010 & \cellcolor{red!20} 100 & \textbf{1}00 & 010 & 100 & 100 & 010 \\
\hline \hline
\rowcolor{gray!20} \cellcolor{white} $y(b)$ &  010 & 00 & 00 & 010 & 00 & \cellcolor{red!20} 0\textbf{1}0 & 00 & 010 & 00 & 00 & 010 \\
\hline 
$b$ & 1 & 0 & 0 & 1 & 0 &  1 & 0 & 1 & 0 & 0 & 1 \\
\hline
\end{tabularx}
\end{center}
\end{table}

\subsubsection{Proof of Theorem \ref{th:rCCnat}}

Note that the reduction above is fairly restrictive; we require the lengths of $x$ and $y$ to be $3n$ and $2n+k$ respectively. However, we may relax this requirement via a simple padding argument.

\begin{proof}
Lower bounds follow from two facts regarding the strings constructed in Proposition \ref{prop:redDIS}:
\begin{enumerate}
     \item As $x$ always has length $3n$, the exact same construction reduces $\DIS^n$ to $\SSD^{3n}$.
    \item Alice may \emph{pad} as many $1$'s as desired to the end of $x$ without compromising the reduction. In particular, if Alice constructs $x \cdot 1^N$ and Bob constructs $y$ as above, we have a reduction from $\DIS^n_k$ to $\SSD^{3n+N}_{2n+k}$ for any value of $N$. Then, for any fixed value of $n$, we may make a simple parameterization, $n' = 3n+N$ and $k' = 2n+k$, to obtain 
    \[
    R(\SSD^{n'}_{k'}) = \Omega(k) = \Omega(k').
    \]
   
\end{enumerate}

Indeed, both of these lower bounds are met (up to constant factors) by the trivial deterministic protocol.

\end{proof}

\begin{remark}
It is interesting to note that, although $y$ appears non-contiguously in $x$, it is highly \emph{constrained} (using the language of \cite{flajolet2006hidden}). That is, $y$ appears in $x$ such that no two consecutive characters are separated by more than $1$ character of $x$.
\end{remark}

\section{VC dimension}

This section amounts to an application of Lemma \ref{lem:VCred}, which connects our communication complexity results to  subsequent sample complexity results.

\subsection{Proof of Lemma \ref{lem:VCred}}

\begin{proof}
(We prove only the first statement as both arguments are identical.) Recall that a reduction from $f$ to $g$ induces two mappings $\rho: \XX \to \AA$ and $\phi: \YY \to \BB$ such that 
\[
g(\rho(x),\phi(y)) = f(x,y) \text{ for all $(x,y) \in \XX \times \YY$.}
\]

Let $S \subseteq \YY$ be shattered by $f_\XX$. By definition, for every $T \subseteq S$, there exists an $x_T \in \XX$ such that (for all $s \in S$),
\begin{equation}
\label{eq:fshat}
f(x_T, s) = 1 \text{ if and only if } s \in T.
\end{equation}

We will show that every such $T$ (which is realized by $x_T$) uniquely maps to a subset $\phi(T)$ which is realized by $\rho(x_T)$. Indeed, as $\rho$ is injective, every $x_T$ maps to a unique $\rho(x_T) \in \AA$ such that
\[
g(\rho(x_T),\phi(s)) = f(x_T,s).
\]
By (\ref{eq:fshat}), and for every $s \in S$, this is equal to $1$ if and only if $s \in T$. Now consider the set $\phi(S) =\{\phi(s) : s \in S\} \subseteq \BB$. As $\phi$ is injective, we have $s \in T$ if and only if $\phi(s) \in \phi(T)$. Thus, $\phi(S)$ is shattered by $g_{\AA}$. 

\end{proof}

\subsection{Disjointness classifiers}

In the same vein as in Proposition \ref{prop:redDIS}, we first calculate the VC dimension of \emph{disjointness classifiers}\footnote{As $\DIS$ is symmetric ($\DIS(a,b) = \DIS(b,a)$), we can refer to the corresponding hypothesis class as $\DIS^n$.} for the sake of lower-bounding the VC dimension of \emph{subsequence classifiers}.

\begin{table}[H]
\caption{For $k=2,3,4,5$ (and various values of $n$) we calculate (by brute force) the largest $S \in \{0,1\}^n$ that is shattered by $\HH^n_k$.}
\begin{center}
\begin{tabular}{ |c|c|c| } 
 \hline
  $k$ & $n$ & $S \subset \{0,1\}^n$ shattered by $\{0,1\}^k$\\ 
  \hline
 2 & 3 & $011,001$ \\ 
 \hline
 3 & 6 & $100001, 111000, 000111$ \\ 
 \hline
 4 & 5 & $10100, 10010, 01010$ \\
 \hline
 5 & 8 & $11000101, 01110010, 10011010, 10110011$ \\
 \hline
\end{tabular}
\end{center}

\end{table}

\begin{lemma}
\label{lem:VCDIS}
For all $n \geq 2k \geq 0$, 
\begin{enumerate}
    \item $\displaystyle \VCdim(\DIS^n) = n$.
    \item $\displaystyle \VCdim(\DIS^n_k) \geq k$.
\end{enumerate}

\end{lemma}
\begin{proof}
The first statement was already proven by \cite{kremer1999randomized}. We give a restate their proof here for completeness.

\medskip
Of course, $\VCdim(\DIS^n)$ cannot be greater than $n$ by a simple surjectivity argument; if $S$ is shattered, then for each $T \in 2^S$, there must exist a unique \emph{classifying} subset $X_T \in 2^{[n]}$. Thus, $|S| \leq n$. 

For the lower bound, we construct a shatterable set of size $n$. In particular, we take the collection of singletons $S=\{\{1\}, \cdots, \{n\}\}$. Indeed, the subset of singletons $\{ \{i_1\}, \cdots, \{i_k\}\} \subseteq S$ is realized by the \emph{complement} of their union, $C = [n] \setminus \{i_1,\cdots,i_k\}$; for each $i \in [n]$, each singleton $\{i\}$ is disjoint from $C$ if and only if $i \in C$.

\medskip
The second statement follows similarly, with a slight loss of generality due to the fact that every classifying subset must have size exactly $k$.

We can again shatter a subset of the singletons, $T=\{\{1\},\cdots,\{k\}\}$ provided that $k \leq n/2$. Every subset $\{ \{i_1\}, \cdots, \{i_\ell\}\} \subset T$ would indeed be realized by the complement $[k] \setminus \{i_1,\cdots,i_\ell\}$ as before, but every classifying subset must have size $k$ (which is not necessarily the case here). Thus, we \emph{pad} these classifiers with elements not in $[k]$. In particular, let $P(m) = \{k+1,\cdots,k+m\}$ consist of $m$ \emph{padding} elements (where $P(0)=\emptyset$). Then, the set $\{ \{i_1\}, \cdots, \{i_\ell\}\}$ is realized by the union
\[
([k] \setminus \{i_1,\cdots,i_\ell\}) \cup P(\ell)
\]
The left-hand side contributes $k-\ell$ elements and the right-hand side $\ell$, for a grand total of $k$ elements as desired. At most, we require $k$ padding elements to realize $T$ itself. Thus, $T$ is shattered when $k \leq n/2$.

\end{proof}
   
\subsection{Proof of Theorem \ref{th:VCsub}}

Theorem \ref{th:VCsub} is now easily proven.

\begin{proof}
Recall the two statements from the proof of Theorem \ref{th:rCCnat}.

\begin{enumerate}
    \item $\DIS^n$ injectively reduces to $\SSD^{3n}$
    \item $\DIS^n_k$ injectively reduces to $\SSD^{3n+N}_{2n+k}$
\end{enumerate}

\medskip
In the first case, the lower bound follows from Lemmas \ref{lem:VCred} and \ref{lem:VCDIS}, and the upper bound via a simple surjectivity argument. 

\medskip
In the second case, again by Lemmas \ref{lem:VCred} and \ref{lem:VCDIS}, we have for $k \leq n/2$,
\[
k \leq \VCdim(\HH^{3n+N}_{2n+k}) \leq 2n+k.
\]

Then, by optimizing over $n$ (i.e. substituting $n=2k$) we obtain $k \leq \VCdim(\HH^{6k+N}_{5k}) \leq 5k$, and dividing $k$ by $5$ completes the proof.

\end{proof}

\begin{example}
What follows is the explicit construction for $n=9$. The shattered strings (which correspond to the singletons $\{1\},\{2\},\{3\}$), are 
\begin{align*}
    s_1 &= 100010010 \\
    s_2 &= 010100010 \\
    s_3 &= 010010100
\end{align*}

\begin{table}[H]
\caption{For completeness, we enumerate every subsequence-classifier $y$ and the corresponding subset $S_y \subseteq \{s_1,s_2,s_3\}$ (i.e. $y$ is a subsequence of \emph{every string} in $S_y$, but \emph{no strings} in the complement).}

\begin{center}
    \begin{tabular}{|c|c|} 
        \hline
        $y$ &  $S_y \subseteq \{s_1,s_2,s_3\}$\\ 
        \hline
        010010010 & $\emptyset$ \\ 
        \hline
        00010010 & $s_1$ \\ 
        \hline
        01000010 & $s_2$ \\
        \hline
        01001000 & $s_3$ \\
        \hline
        0000010 & $s_1, s_2$ \\
        \hline
        0001000 & $s_1, s_3$ \\
        \hline
        0100000 & $s_2, s_3$ \\
        \hline
        000000 & $s_1, s_2, s_3$ \\
        \hline
    \end{tabular}
\end{center}
\end{table}
\end{example}

\begin{remark}
Note that the same bounds apply even for \emph{arbitrary} $n$.  Clearly, $\{0,1\}^n \subset \{0,1\}^*$, so our shattered set is also a subset of $\{0,1\}^*$. Thus, our results apply even to the hypothesis class $\HH^*_k$ whose domain includes sequences of any length.
\end{remark}

Indeed, Theorems \ref{th:VCsub} and \ref{th:VCsample} tells us that non-contiguous subsequences have sample complexity independent of $n$.

\begin{cor}
\label{cor:samplec}
For all $k \geq 0$, $n \geq 6k/5$, the sample complexity of PAC-learning length-$k$ subsequence classifiers is $\Theta(k)$\footnote{As in Theorem \ref{th:VCsample}, $\Theta$ hides polynomial dependency on $1/\epsilon$, $1/\delta$, and indicates that the upper bound on the asymptotic learning complexity is tight.}. 
\end{cor}

\begin{remark}
Efficiently recovering the subsequence based on the training data via the ERM rule seems intractable even in the realizable case: Computing the longest common subsequence of multiple strings was shown by \cite{jiang1995approximation} to be NP-hard. When the LCS is promised to have length $k$, \cite{irving1992two} give an algorithm which computes a common subsequence of length $k$ in $O(N n(n-k)^{N-1}) = O(kn(n-k)^{O(k)})$ time.
\end{remark}

\subsection{Supersequence classifiers}

The previous section analyzed subsequence classifiers, which essentially parameterize the two-argument subsequence detection problem $f(x,y)$ for a fixed subseqeunce $y$ and variable string $x$, that is $f_y(x)$. We may also consider the analogous \emph{supersequence} classifiers, $f_x(y)$, where the string $x$ is fixed and the subsequence $y$ is varied.

\medskip
As the disjointness function $\DIS^n(x,y)$ is symmetric with respect to $x$ and $y$, the following is a direct consequence of Lemmas \ref{lem:VCred} (second statement) and \ref{lem:VCDIS}.

\begin{cor}
For all $n \geq 6k/5 \geq 0$,
\begin{enumerate}
    \item $\displaystyle \frac{n}{3} \leq \VCdim(\GG^n_*) $.
    \item $\displaystyle \frac{k}{5} \leq \VCdim(\GG^n_k)$.
\end{enumerate}
\end{cor}

As usual, we can prove a general upper bound of $n$ by a simple surjectivity argument. However, in some cases it is possible to obtain a tighter bound.

\begin{prop}
\label{prop:GUB}
For all $n \geq 0$ and $k \geq n/2$,
\[
\VCdim(\GG^n_k) \leq n \cdot H(k/n) + 1
\]
where $H(x) = -x\log_2 x - (1-x)\log_2(1-x))$ is the binary entropy function.
\end{prop}

Before proving Proposition \ref{prop:GUB}, we first prove the following Lemma.

\begin{lemma}
Let $E(n,k)$ denote the maximum number of length-$n$ supersequences of any \textbf{fixed} binary string of length $k$. Then,
\[
\VCdim(\GG^n_k) \leq \log_2(E(n,k)) + 1.
\]
\end{lemma}

\begin{proof}
Let $S$ be some shattered set $(y_1,\cdots,y_d)$. By definition, for each $T \subseteq S$, there exists a unique classifying supersequence $x_T$ which realizes $T$. In particular, as the string $x_1$ belongs to precisely $2^{d-1}$ elements of $2^S$ (i.e. unique subsets of $S$), there must exist at least $2^{d-1}$ unique \emph{supersequences} of $y_1$. Thus, if $E(n,k) < 2^{d-1}$, we have a contradiction.

\end{proof}

Thus, proving Proposition \ref{prop:GUB} amounts to calculating the value of $E(n,k)$. To begin with, we may consider the subsequence $y = 1^k$. Then, any $x \in \{0,1\}^n$ is a supersequence of $y$ if and only if $x$ contains $\ell \geq k$ ones. We have by counting that there are exactly $\binom{n}{\ell}$ binary strings which contain precisely $\ell$ zeros. Thus, we obtain the lower bound $E(n,k) \geq \sum_{\ell = k}^n \binom{n}{\ell}$. Interestingly, as noted by \cite{dixon2013longest}, $E(n,k)$ is \emph{invariant} over the choice of subsequence. Thus, this expression is indeed the exact value of $E(n,k)$. For completeness, we give a self-contained proof below.

\begin{lemma}
For all $n \geq k \geq 0$, 
\[
E(n,k) = \displaystyle \sum_{\ell = k}^n \binom{n}{\ell}
\]
\end{lemma}
\begin{proof}

First we define some simplifying notation: For a binary string $z$, we let $z_{-1}$ denote the \emph{tail} of $z$ (i.e. $z_{-1} = z_{2}z_{3}\cdots$). Now, let $n \geq k \geq 0$ and fix a binary string $y \in \{0,1\}^k$. Then, let $E(n,y)$ denote the number of supersequences of $y$. We show inductively that $E(n,y)$ is invariant over the choice of $y$. 

\medskip
Clearly, when $y$ has length $1$, we have $E(n,0) = E(n,1) = 2^{n}-1$, and when $y$ has length $n$ we have $E(n,y)=1$. Then, for $1 < |y| < n$, we have the following. 
\begin{align*}
E(n,y) &= \sum_{x \in \{0,1\}^n} \ind[x \text{ contains $y$ as a subsequence}] \\
&=  \sum_{x \in \{0,1\}^n} ( \ind[x_1=y_1]\ind[x_{-1} \text{ contains $y_{-1}$ as a subsequence}] \\ 
& \qquad \qquad + \ind[x_1 \neq y_1]\ind[x_{-1} \text{ contains $y$ as a subsequence}] ) \\
&= \sum_{z \in \{0,1\}^{n-1}} \ind[z \text{ contains $y_{-1}$ as a subsequence}] + \sum_{z \in \{0,1\}^{n-1}} \ind[z \text{ contains $y$ as a subsequence}] \\
&= E(n-1,y_{-1}) + E(n-1,y). 
\end{align*}

The induction step essentially ``strips" the first bit from $x$, regardless of the value of $y$. Thus,  $E(n,y)$ depends only on the length of $y$. So we may abuse notation and let $E(n,k) := E(n,1^k)$ for the sake of computation. Then, $x \in \{0,1\}^n$ is a supersequence of $1^k$ if and only if $x$ contains $\ell \geq k$ ones. The result follows from the fact that there are exactly $\binom{n}{\ell}$ binary strings which contain precisely $\ell$ ones.

\end{proof}

Now, upper bounding the VC dimension of supersequence classifiers amounts to the following well-known application of Stirling's approximation.

\begin{proof}[Proof of Proposition \ref{prop:GUB}]
The result follows from the estimate (for $\alpha > 1/2$)  
\[
\sum_{k = \alpha n}^n \binom{n}{k} \leq 2^{H(\alpha)n}
\]
as derived in \cite{macwilliams1977theory}.

\end{proof}

\section{Future directions}
There are several questions arising from this work. For the learning problem, we focused on binary alphabets. Studying the VC dimension for larger alphabets is of interest in several applications. Additionally, studying subsequence classifiers based on the occurrence of \emph{multiple} subsequences is of potential interest. Recently the effect contiguity of features in classification tasks on the efficacy of various architectures of neural networks (convolutional vs fully connected nets) was studied by~\cite{shalev2020computational}. Future empirical and theoretical study on how the number and magnitude of gaps influences the success of different learning methods for our subsequence-based classifier could be of interest.

\section{Acknowledgements}
We wish to thank the anonymous reviewers for providing helpful feedback, as well as for bringing \cite{bringmann2018sketching} to our attention. We also thank Cristopher Moore for insightful discussions on this topic.

\bibliographystyle{alpha}
\bibliography{Bib}

\end{document}